\documentclass[11pt,a4paper]{article}

\usepackage{lmodern}
\usepackage[T1]{fontenc}
\usepackage[utf8]{inputenc}
\usepackage{amsmath,amssymb,amsthm}
\usepackage{mathtools}
\usepackage{geometry}
\usepackage{booktabs}
\usepackage{hyperref}
\hypersetup{hidelinks,colorlinks=false,linkcolor=black,citecolor=black,urlcolor=black}
\usepackage{graphicx}

\theoremstyle{plain}
\newtheorem{theorem}{Theorem}[section]
\newtheorem{lemma}[theorem]{Lemma}
\newtheorem{corollary}[theorem]{Corollary}
\newtheorem{proposition}[theorem]{Proposition}

\theoremstyle{definition}
\newtheorem{definition}[theorem]{Definition}
\newtheorem{remark}[theorem]{Remark}

\newtheorem{observation}[theorem]{Observation}

\DeclareMathOperator{\weight}{w}

\newcommand{\gri}{\gamma_{\mathrm{ri},k}}
\newcommand{\cardK}{\square K_k}
\newcommand{\prism}[1]{#1 \cardK}
\newcommand{\setK}{[k]}

\newcommand{\cS}{\mathcal{S}}
\newcommand{\cL}{\mathcal{L}}

\title{A Degree Threshold for Independent Domination in Generalized Prisms}
\author{Hassine Achour\\
\texttt{hassine.achour@gmail.com}\\
Independent Researcher}
\date{August 8, 2026\\[1ex]
\small Version 1.0.0 --- Zenodo DOI \texttt{10.5281/zenodo.21847195} --- GitHub: \texttt{hermespromox/degree-threshold-generalized-prisms}\\[0.4ex]
\small DOI: \texttt{10.5281/zenodo.21847195} (concept, cite this) --- this version: \texttt{10.5281/zenodo.21847212}\\
\small GitHub: \url{https://github.com/hermespromox/degree-threshold-generalized-prisms}}

\begin{document}
\maketitle

\begin{abstract}
We study per-colour independent $k$-rainbow domination and its equivalence to independent domination in generalized prisms, following the standard source for the invariant and the prism identity \cite{KST2018}. The per-colour independent version is the one satisfying $\gri(G)=i(G\cardK)$, where $G\cardK=G\square K_{k}$ and $i(\cdot)$ is the independent domination number (Proposition~\ref{prop:identity}).
Using the known generalized-prism identity and the known trivial regime $k>\Delta$, we prove NP-completeness on a restricted family at the exact boundary $k=\Delta$ that is considerably narrower than previously known \cite{BK2019}.
Specifically, for every fixed $k\ge 3$ the language $\cL_{k}=\{S(Q)\in\cS_{k}:\gri(S(Q))\le |E(Q)|\}$ is NP-complete, where $\cS_{k}=\{S(Q):Q\text{ simple $k$-regular}\}$ is the $C_{4}$-free $(k,2)$-biregular subdivision family; explicitly $\gri(S(Q))\ge|E(Q)|$ and $\gri(S(Q))=|E(Q)|$ iff $\chi'(Q)=k$.
Moreover, for $k\ge 3$ and $Q$ simple $k$-regular, $\gri(S(Q))=|E(Q)|+\mu_{k}(Q)$ where $\mu_{k}(Q)$ is the minimum number of vertices to label in a natural edge-assignment model; writing $\delta_{k}(Q)=\gri(S(Q))-|E(Q)|$ we have $\delta_{k}(Q)=\mu_{k}(Q)$ with $0\le\delta_{k}(Q)\le|V(Q)|$.
For $k=3$, $\delta_{3}(Q)$ coincides with the known edge-colouring degree $d(Q)$, hence with resistance $r(Q)$ and $\rho(Q)$ for subcubic graphs \cite{FMS2022}, where $r(G)$ is the resistance (minimum number of edges whose removal yields a $3$-edge-colourable graph) and $\rho(G)$ is the minimum number of vertices whose deletion yields a $3$-edge-colourable graph.
There is a one-unit threshold in the following sense: $k>\Delta$ forces value $|V|$ for every graph, whereas at $k=\Delta$ NP-hard instances exist, even within $\cS_{k}$.
\end{abstract}

\noindent\textbf{Keywords:} per-colour independent rainbow domination, rainbow independent domination, generalized prism, Cartesian product, independent domination, edge-colouring, subdivision graph, degree threshold, NP-completeness\\
\textbf{MSC 2020:} 05C69, 05C15, 05C76, 68Q17

\section{Introduction}

Let $k\ge 1$ be an integer and write $\setK=\{1,\dots,k\}$.
A \emph{per-colour independent $k$-rainbow dominating function} ($k$-RiDF) on a graph $G$
is a function
\[
f\colon V(G)\to 2^{\setK}
\]
satisfying the conditions detailed in Section~\ref{sec:prelim}:
each vertex receives either $\emptyset$ or a singleton $\{i\}$,
each colour class $V_{i}=\{v:f(v)=\{i\}\}$ is independent in $G$,
and every vertex labelled $\emptyset$ sees all $k$ colours in its neighbourhood.
The \emph{weight} is $\weight(f)=\sum_{v\in V(G)}|f(v)|$
and the \emph{per-colour independent $k$-rainbow domination number} is
\[
\gri(G)=\min\{\weight(f): f\text{ is a $k$-RiDF of }G\}.
\]
If no confusion arises we write $f(v)=0$ for $f(v)=\emptyset$ and $f(v)=i$ for $f(v)=\{i\}$.
Throughout, ``$k$-rainbow independent domination'' means this per-colour independent version with $|f(v)|\le 1$; it is the variant satisfying $\gri(G)=i(G\cardK)$ (Proposition~\ref{prop:identity}). Independence of the union $\bigcup_{i\ge1}V_{i}$ is strictly stronger and is \emph{not} assumed.

The \emph{generalized prism} (or prism over $G$) is the Cartesian product
\[
\prism{G}=G\square K_{k},
\]
where $K_{k}$ is the complete graph on $k$ vertices and $\square$ denotes the Cartesian product:
$V(G\square K_{k})=V(G)\times\setK$ and $(g,i)$ is adjacent to $(g',i')$ iff
either $g=g'$ and $i\neq i'$ or $gg'\in E(G)$ and $i=i'$.
Thus $\prism{G}$ consists of $k$ copies (layers) of $G$ with a $k$-clique fiber over each
vertex of $G$.

The standard source for the invariant and the generalized-prism identity is Kraner \v{S}umenjak, Rall and Tepeh \cite{KST2018}, whose abstract explicitly states the invariant coincides with $i(G\square K_{k})$; the ordinary rainbow identity $\gamma_{rk}(G)=\gamma(G\cardK)$ is due to Bre\v{s}ar et~al.\ \cite{BresarSum2007}. We state the per-colour independent analogue as
\begin{equation}\label{eq:identity}
\gri(G)=i(G\cardK),
\end{equation}
where $i(H)$ denotes the independent domination number of $H$ (the minimum size of a maximal independent set, equivalently of an independent dominating set),
and give a self-contained full proof in Section~\ref{sec:prelim} (Proposition~\ref{prop:identity}). Hence questions about per-colour independent rainbow domination translate directly into questions about independent domination in prisms. Brezovnik and Kraner \v{S}umenjak \cite{BK2019} prove NP-completeness of $k$-rainbow independent domination on bipartite graphs in general; to the best of our knowledge, our novelty is hardness on the considerably narrower $C_{4}$-free $(k,2)$-biregular subdivision family at $k=\Delta$.

It is natural to ask how $\gri(G)$ and $i(G\cardK)$ depend on the relation between $k$ and the maximum degree $\Delta(G)$.
There is a one-unit threshold in the following sense: $k>\Delta(G)$ forces a trivial value for every graph, whereas at $k=\Delta(G)$ NP-hard instances exist, even within a very restricted subfamily.

If $k>\Delta(G)$, domination with $k$ colours is forced to be trivial:
every vertex must receive a non-empty label, so $\gri(G)=|V(G)|$ and $i(G\cardK)=|V(G)|$.
This is elementary (Section~\ref{sec:trivial}) and holds for every graph.
Consequently deciding whether $\gri(G)\le b$ (or $i(G\cardK)\le b$) in this regime reduces to comparing $b$ with $|V(G)|$.

The interesting phenomenon occurs at the boundary $k=\Delta(G)$.
We prove that hardness persists at the boundary in an existential sense.
For each fixed $k\ge3$ define the restricted image family
\[
\cS_{k}=\{\,S(Q): Q\text{ is simple $k$-regular}\,\},
\]
where $S(Q)$ is the subdivision graph (Definition~\ref{def:subdiv}), and the language
\[
\cL_{k}=\{\,H\in\cS_{k}: \gri(H)\le |E(Q)|\,\}
\]
with $H=S(Q)$ and $|E(Q)|=|B|$ intrinsic to $H$ via its bipartition (Remark~\ref{rem:promise}).
We show $\cL_{k}$ is NP-complete with membership in NP (Section~\ref{sec:boundary}).
More precisely, on $\cS_{k}$ we have $\gri(S(Q))\ge|E(Q)|$ and equality holds iff $Q$ is $k$-edge-colourable, i.e.\ $\chi'(Q)=k$.
NP-completeness follows from the classical hardness of $k$-edge-colouring $k$-regular simple graphs due to Holyer~\cite{Holyer1981} and Leven--Galil~\cite{LevenGalil1983}, stated exactly as needed in Section~\ref{sec:prelim}.
We emphasise that hardness at $k=\Delta$ is existential on the subfamily $\cS_{k}$, not universal for all graphs with $k=\Delta$.

We then quantify the excess over $|E(Q)|$.
For $k\ge3$ and $Q$ simple $k$-regular, define first
\[
\mu_{k}(Q)=\min\{|X|: \exists\,c:E(Q)\to\setK,\ \ell:X\to\setK\text{ satisfying }(\dagger),(\ast)\text{ below}\},
\]
where $(\dagger)$ requires unlabelled vertices to see all $k$ colours and $(\ast)$ requires labelled vertices to miss their label colour (Section~\ref{sec:excess}). We prove $\gri(S(Q))=|E(Q)|+\mu_{k}(Q)$. Only afterwards do we set $\delta_{k}(Q)=\gri(S(Q))-|E(Q)|$ and conclude $\delta_{k}(Q)=\mu_{k}(Q)$, with $0\le\delta_{k}(Q)\le|V(Q)|$ via an explicit construction.
For $k=3$ the excess coincides with the classical edge-colouring degree $d(Q)$ and hence with resistance $r(Q)$ and $\rho(Q)$ for subcubic graphs \cite{FMS2022} (Theorem~\ref{thm:delta3}), where $r(G)$ is the resistance (minimum number of edges whose removal yields a $3$-edge-colourable graph) and $\rho(G)$ is the minimum number of vertices whose deletion yields a $3$-edge-colourable graph; for $k>3$ the relation to other obstruction parameters remains to be clarified. Computing $\delta_{k}$ on a given $Q$ is a finite integer programme (Theorem~\ref{thm:representation}); we state no numerical value without a certificate.

Via~\eqref{eq:identity} this yields the degree-threshold corollary for independent domination in generalized prisms (Section~\ref{sec:threshold}): for every fixed $k\ge3$, deciding $i(H\cardK)\le|B|$ is NP-complete under restriction $H\in\cS_{k}$ with $\Delta(H)=k$, whereas $k>\Delta(H)$ forces $i(H\cardK)=|V(H)|$ for all graphs $H$.

\section{Preliminaries}\label{sec:prelim}

All graphs are finite, simple and undirected unless stated otherwise.
For $v\in V(G)$, $N(v)$ is the open neighbourhood and $N[v]=N(v)\cup\{v\}$.
$\Delta(G)$ denotes maximum degree. $\chi'(G)$ denotes the chromatic index (edge-chromatic number). By Vizing's theorem \cite{Vizing1964} $\chi'(G)\in\{\Delta(G),\Delta(G)+1\}$ for simple graphs; those with $\chi'(G)=\Delta(G)$ are \emph{Class~1}, otherwise \emph{Class~2}. For bipartite graphs $\chi'(G)=\Delta(G)$ (K\"onig \cite{Konig1916}).

\subsection{Per-colour independent rainbow domination}

\begin{definition}[$k$-RiDF, per-colour independent]\label{def:ridf}
Let $k\ge 1$ and $G$ be a graph.
A \emph{per-colour independent $k$-rainbow dominating function} ($k$-RiDF) is a function $f\colon V(G)\to 2^{\setK}$ such that:
\begin{enumerate}
\item[(i)] $|f(v)|\le 1$ for every $v\in V(G)$;
\item[(ii)] for every $i\in\setK$, $V_{i}=\{v:f(v)=\{i\}\}$ is independent in $G$;
\item[(iii)] for every $v$ with $f(v)=\emptyset$, $\bigcup_{u\in N(v)} f(u)=\setK$.
\end{enumerate}
We identify $\emptyset$ with $0$ and $\{i\}$ with $i$.
Write $V_{0}=\{v:f(v)=0\}$ and $V_{i}=\{v:f(v)=\{i\}\}$ for $i\in\setK$, so $(V_{0},V_{1},\dots,V_{k})$ partitions $V(G)$.
Condition~(ii) means \emph{each} colour class $V_{i}$ is independent (not necessarily the union $\bigcup_{i\ge1}V_{i}$); condition~(iii) says every $v\in V_{0}$ has at least one neighbour in each $V_{i}$.
The \emph{weight} is $\weight(f)=\sum_{v\in V(G)}|f(v)|=\sum_{i=1}^{k}|V_{i}|$.
\[
\gri(G)=\min\{\weight(f): f\text{ is a $k$-RiDF of }G\}.
\]
\end{definition}

\begin{remark}
Some authors define rainbow domination without (ii) or allowing $|f(v)|$ arbitrary.
The independent variant adds (ii). Requiring the union $\bigcup_{i\ge1}V_{i}$ to be independent is strictly stronger and would contradict the prism identity (e.g.\ $G=K_{2}$, $k=2$ gives $i(C_{4})=2$ but no union-independent function of weight~$2$ exists). We therefore use per-colour independence throughout.
The case $k=1$ is ordinary independent domination: a $1$-RiDF is the characteristic function of a maximal independent set, and $\gamma_{\mathrm{ri},1}=i(G)$.
Ordinary rainbow domination \cite{BresarSum2007,HartnellRall2013} corresponds to dropping (ii) and allowing larger labels; the identities $\gamma_{rk}(G)=\gamma(G\cardK)$ and $\gri(G)=i(G\cardK)$ are distinct.
\end{remark}

\begin{definition}[Generalized prism]\label{def:prism}
For $G$ and $k\ge 1$, $G\cardK = G\square K_{k}$ is the Cartesian product defined above.
Its vertex set is $V(G)\times\setK$, with edges $(g,i)(g,i')$ for $i\neq i'$ (clique fibers) and $(g,i)(g',i)$ whenever $gg'\in E(G)$ (layer edges).
We call it the \emph{$k$-prism} over $G$ or the \emph{generalized prism}.
\end{definition}

\begin{proposition}[Prism identity, per-colour independent version]\label{prop:identity}
For every graph $G$ and $k\ge 1$, $\gri(G)=i(G\cardK)$.
\end{proposition}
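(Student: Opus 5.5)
We prove the two inequalities separately, using the natural bijection between functions $f\colon V(G)\to\{0\}\cup\setK$ and vertex subsets $D\subseteq V(G)\times\setK$ that meet every clique fiber $\{g\}\times\setK$ in at most one vertex. Concretely, to $f$ we associate
\[
D_f=\{(g,i): f(g)=i\neq 0\},
\]
and conversely to such a $D$ we associate $f_D(g)=i$ if $(g,i)\in D$ and $f_D(g)=0$ if the fiber over $g$ misses $D$. The map preserves weight, since $|D_f|=\weight(f)$.

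\textbf{Step 1: $i(G\cardK)\le\gri(G)$.} Let $f$ be a $k$-RiDF. We show that $D_f$ is an independent dominating set of $G\cardK$.

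For independence, two vertices of $D_f$ cannot lie in the same fiber, since $|f(g)|\le 1$. Two vertices joined by a layer edge have the form $(g,i),(g',i)$ with $gg'\in E(G)$, which would put $g,g'$ in $V_i$. This contradicts condition (ii).

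For domination, take $(g,j)\notin D_f$ and split into two cases.
\begin{itemize}
\item If $f(g)\neq 0$, then $(g,f(g))\in D_f$ is a fiber neighbour of $(g,j)$.
\item If $f(g)=0$, condition (iii) gives a neighbour $g'$ of $g$ with $f(g')=j$. Then $(g',j)\in D_f$ is a layer neighbour of $(g,j)$.
\end{itemize}
Hence $i(G\cardK)\le|D_f|=\weight(f)$.

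\textbf{Step 2: $\gri(G)\le i(G\cardK)$.} Let $D$ be a minimum independent dominating set of $G\cardK$. Because each fiber is a clique, an independent $D$ meets each fiber in at most one vertex, so $f_D$ is well defined and satisfies condition (i).

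Condition (ii) holds because $g,g'\in V_i$ with $gg'\in E(G)$ would make $(g,i)(g',i)$ an edge inside $D$.

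For condition (iii), let $f_D(g)=0$, so the whole fiber over $g$ avoids $D$. For each colour $j$, the vertex $(g,j)$ must be dominated by $D$, and it has no fiber neighbour in $D$. It is therefore dominated via a layer edge by some $(g',j)\in D$ with $g'\in N(g)$, so $j\in f_D(g')$. Thus $f_D$ is a $k$-RiDF of weight $|D|$.

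The argument is entirely routine. The only point needing care is Step 2's use of independence of $D$: this is what guarantees at most one vertex per fiber, and hence $|f_D(g)|\le 1$. The same point explains why per-colour independence, rather than independence of $\bigcup_i V_i$, is the right condition (ii): layer edges only join vertices of the same colour, so adjacent vertices of $G$ carrying different colours impose no constraint in the prism. I expect no real obstacle beyond writing the two correspondences cleanly.
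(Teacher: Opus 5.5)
Your proof is correct and follows essentially the same route as the paper: the weight-preserving correspondence $f\mapsto D_f$, $D\mapsto f_D$ between $k$-RiDFs and independent dominating sets of $G\cardK$, with the fiber cliques giving condition (i) and the layer edges giving conditions (ii) and (iii). The only cosmetic difference is that you start Step 2 from a minimum independent dominating set, whereas the paper applies the construction to an arbitrary one.
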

\begin{proof}
We give a full self-contained proof.

\textit{From $k$-RiDF to independent dominating set.}
Let $f$ be a $k$-RiDF of $G$ with partition $(V_{0},\dots,V_{k})$. Put
\[
D_{f}=\{(v,i): v\in V_{i}\}\subseteq V(G\cardK).
\]
Then $|D_{f}|=\sum_{i}|V_{i}|=\weight(f)$.

\emph{Independence of $D_{f}$:} Two vertices $(v,i),(u,i)$ with $vu\in E(G)$ are adjacent in $G\cardK$ via a layer edge; they cannot both belong to $D_{f}$ because $V_{i}$ is independent in $G$ (Definition~\ref{def:ridf}(ii)). Two vertices $(v,i),(v,j)$ with $i\neq j$ in the same fiber are adjacent via the $K_{k}$-fiber clique; they cannot both belong to $D_{f}$ because $|f(v)|\le1$ implies $v$ belongs to at most one $V_{i}$. Hence $D_{f}$ is independent; in particular at most one vertex per fiber is chosen.

\emph{Domination:} Let $(v,j)\in V(G\cardK)$. If $v\in V_{i}$ for some $i\neq j$, then $(v,i)\in D_{f}$ dominates $(v,j)$ inside the fiber clique. If $v\in V_{0}$, then by Definition~\ref{def:ridf}(iii) for every colour $j$ there exists $u\in N_{G}(v)$ with $f(u)=\{j\}$, i.e.\ $u\in V_{j}$; then $(u,j)\in D_{f}$ is adjacent to $(v,j)$ via the layer edge $(u,j)(v,j)$. If $v\in V_{j}$ then $(v,j)\in D_{f}$ dominates itself. Thus every vertex of $G\cardK$ is dominated by $D_{f}$. Since $D_{f}$ is independent and dominating, it is a maximal independent set; its size is $\weight(f)$. Hence $i(G\cardK)\le\gri(G)$.

\textit{From independent dominating set to $k$-RiDF.}
Conversely, let $D\subseteq V(G\cardK)$ be an independent dominating set. Because each fiber $\{v\}\times\setK$ induces a $k$-clique, independence implies $D$ contains at most one vertex per fiber. Define
\[
V_{i}=\{v\in V(G): (v,i)\in D\}\quad(i\in\setK),\qquad V_{0}=V(G)\setminus\bigcup_{i}V_{i}.
\]
and $f_{D}(v)=\{i\}$ if $v\in V_{i}$, $\emptyset$ otherwise. Then $|f_{D}(v)|\le1$ by the at-most-one-per-fiber observation.

Each $V_{i}$ is independent in $G$: if $uv\in E(G)$ and $u,v\in V_{i}$ then $(u,i)(v,i)$ would be an edge of $G\cardK$ inside $D$, contradicting independence of $D$. Hence Definition~\ref{def:ridf}(ii) holds per colour.

Let $v\in V_{0}$, so $(v,j)\notin D$ for every $j\in\setK$. Fix $j\in\setK$. Since $D$ dominates $G\cardK$, $(v,j)$ has a neighbour in $D$. Its neighbours are: (a) fiber neighbours $(v,i)$, $i\neq j$, none of which is in $D$ because $v\in V_{0}$; (b) layer neighbours $(u,j)$ with $u\in N_{G}(v)$. Hence there must exist $u\in N_{G}(v)$ with $(u,j)\in D$, i.e.\ $u\in V_{j}$. As $j$ was arbitrary, $v$ sees every colour $j$ in its $G$-neighbourhood, which is Definition~\ref{def:ridf}(iii). Thus $f_{D}$ is a $k$-RiDF with $\weight(f_{D})=|D|$.

Since every independent dominating set yields a $k$-RiDF of same weight and vice versa, the minima coincide: $\gri(G)=i(G\cardK)$. No minimality of $D$ beyond being an independent dominating set (hence maximal independent) is needed for the correspondence; independence plus domination already imply (ii) and (iii).
\end{proof}

We use~\eqref{eq:identity} freely to transfer results between $\gri$ and $i(\cdot\cardK)$.

\subsection{Subdivision graphs and the family \texorpdfstring{$\cS_{k}$}{S\_k}}

\begin{definition}[Subdivision graph $S(Q)$]\label{def:subdiv}
Let $Q$ be a (multi)graph. The \emph{subdivision graph} $S(Q)$ is obtained by subdividing every edge of $Q$ exactly once: replace each $e=xy\in E(Q)$ by a path $x$--$b_{e}$--$y$ where $b_{e}$ is a new vertex. Formally $V(S(Q))=V(Q)\cup E(Q)$ and $E(S(Q))=\{a b_{e}: a\in V(Q), e\in E(Q), a\text{ incident to }e\text{ in }Q\}$.
We write $A=V(Q)$, $B=E(Q)$ for the bipartition; we denote $b_{e}\in B$ the vertex corresponding to $e$.
If $Q$ is $k$-regular, $S(Q)$ is $(k,2)$-biregular: $\deg_{S(Q)}(a)=k$ for $a\in A$ and $\deg_{S(Q)}(b)=2$ for $b\in B$. In particular $\Delta(S(Q))=k$ when $k\ge 2$ and $|V(S(Q))|=|V(Q)|+|E(Q)|$, $|E(S(Q))|=2|E(Q)|$.
\end{definition}

For fixed $k\ge 3$ define
\[
\cS_{k}=\{\,S(Q): Q\text{ is simple $k$-regular}\,\},\qquad
\cL_{k}=\{\,H\in\cS_{k}: \gri(H)\le |E(Q)|\text{ where }H=S(Q)\,\}.
\]
When $H=S(Q)$, $|B|=|E(Q)|$ is intrinsic to $H$: from $H$ we can recover the bipartition $(A,B)$ (connected bipartite graphs have a unique bipartition up to swapping) and $\deg(b)=2$ for $b\in B$, $\deg(a)=k$ for $a\in A$, so $B=\{v:\deg(v)=2\}$ and $A=\{v:\deg(v)=k\}$, hence $|B|=|\{v:\deg(v)=2\}|$. Via $\gri=i(\cdot\cardK)$, $\cL_{k}$ also encodes $\{H\in\cS_{k}:i(H\cardK)\le|B|\}$.

\begin{lemma}[$C_{4}$-freeness]\label{lem:c4free}
If $Q$ is simple (no loops nor parallel edges) then $S(Q)$ is $C_{4}$-free and has girth at least $6$ (with girth $\infty$ if $Q$ is a forest).
\end{lemma}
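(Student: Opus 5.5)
The plan is to show that every cycle of $S(Q)$ is obtained from a cycle of $Q$ by subdividing each of its edges once. This doubles the length, so the girth bound will follow from the fact that a simple graph has girth at least $3$.

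First, I use that $S(Q)$ is bipartite with parts $A=V(Q)$ and $B=E(Q)$ (Definition~\ref{def:subdiv}). Every edge joins some $a\in A$ to some $b_e\in B$. Hence every cycle of $S(Q)$ alternates between $A$ and $B$, has even length $2m$, and can be written as $a_0 b_{e_1} a_1 b_{e_2}\cdots a_{m-1} b_{e_m} a_0$.

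Next, I read off a closed walk in $Q$. Each $b_{e_j}$ has exactly the two neighbours in $S(Q)$ given by the endpoints of $e_j$. So $e_j=a_{j-1}a_j$ (indices mod $m$), and $a_0e_1a_1\cdots e_ma_0$ is a closed walk in $Q$.

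Since the cycle in $S(Q)$ repeats no vertex, the $a_j$ are pairwise distinct and the $e_j$ are pairwise distinct. The walk is therefore a cycle of $Q$ of length $m$, provided $m\ge 3$. It remains to rule out $m\le 2$, and this is the only place simplicity enters.

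For $m=1$, the edge $e_1$ would join $a_0$ to itself, i.e.\ be a loop. Even setting that aside, $b_{e_1}$ would have to be adjacent to $a_0$ twice, which is impossible in the simple graph $S(Q)$. For $m=2$, which is exactly the $C_4$ case, we would get $a_0 b_{e_1} a_1 b_{e_2} a_0$ with $e_1\neq e_2$ both joining $a_0$ and $a_1$. These are parallel edges, which are excluded. Hence $m\ge 3$, every cycle has length $2m\ge 6$, and in particular there is no $C_4$.

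Finally, if $Q$ is a forest, the argument shows that any cycle in $S(Q)$ would produce a cycle in $Q$. So $S(Q)$ is acyclic and its girth is $\infty$ by convention. The only point needing care is that distinctness of the vertices on the $S(Q)$-cycle forces distinctness of both the $a_j$ and the $e_j$; I do not expect a real obstacle.
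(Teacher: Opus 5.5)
Your proof is correct, and it is organised somewhat differently from the paper's. The paper argues locally. $S(Q)$ is bipartite, so all cycles are even. A $4$-cycle $a$--$b_e$--$a'$--$b_{e'}$--$a$ would force $e,e'$ to be parallel edges of $Q$, so every cycle has length at least $6$.

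You instead prove a global correspondence: every cycle of length $2m$ in $S(Q)$ projects to a cycle of length $m$ in $Q$. Simplicity is then used only to exclude $m\le 2$, and your $m=2$ case is exactly the paper's $C_4$ argument.

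The paper's version is shorter and suffices for girth at least $6$. However, it never justifies the clause ``girth $\infty$ if $Q$ is a forest''; your projection handles that clause directly. Together with the trivial converse (subdividing a cycle of $Q$ gives a cycle of twice the length), your argument shows that the girth of $S(Q)$ is exactly twice that of $Q$. This is also what underlies the paper's later remark that subdividing high-girth $k$-regular graphs yields high girth.
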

\begin{proof}
$S(Q)$ is bipartite by construction.
A $4$-cycle would be $a$--$b_{e}$--$a'$--$b_{e'}$--$a$ with $a\neq a'\in A$, $b_{e}\neq b_{e'}\in B$. Then $e$ and $e'$ are both incident to $a$ and $a'$ in $Q$, so $Q$ has two distinct edges joining the same pair $\{a,a'\}$, i.e.\ parallel edges. If $Q$ is simple this is impossible. Hence no $C_{4}$. Any cycle alternates $A$--$B$--$A$--$B\cdots$, so its length is even $\ge 6$ if a cycle exists.
\end{proof}

\begin{remark}[Recognition of $\cS_{k}$]\label{rem:promise}
Membership in NP for $\cL_{k}$ is clear: a $k$-RiDF is a polynomial-size certificate verifiable in polynomial time (check per-colour independence and rainbow domination).

Recognizing whether an arbitrary bipartite graph $H$ belongs to $\cS_{k}$ is polynomial-time. Indeed, $H\in\cS_{k}$ iff $H$ is bipartite, exactly
\[
B=\{v:\deg(v)=2\},\qquad A=\{v:\deg(v)=k\}
\]
partition $V(H)$, every $b\in B$ has two neighbours $a,a'\in A$, and contracting each degree-$2$ vertex $b$ (replacing $a$--$b$--$a'$ by an edge $aa'$) yields a simple $k$-regular graph $Q$ on $A$. The bipartition of a connected bipartite graph is unique up to swap and computable by BFS in linear time; contraction and simplicity/$k$-regularity checks are linear in $|E(H)|$. For disconnected $H$ the same holds componentwise. Consequently the bound $|B|=|E(Q)|$ is computable from $H$ alone as the number of degree-$2$ vertices, so the language $\cL_{k}$ is well defined on inputs $H$ without extra advice.

For NP-hardness we do \emph{not} need promise recognition: we give a Karp reduction from $k$-edge-colouring of $k$-regular simple graphs $Q$ to membership in $\cL_{k}$ on $H=S(Q)$. Given $Q$, we construct $S(Q)$ in polynomial time; $Q$ is simple $k$-regular by the source problem's promise. Hardness holds on the image family, which is $C_{4}$-free $(k,2)$-biregular as above. The distinction between ``NP-complete on $\cS_{k}$'' and ``NP-complete with promise that input is in $\cS_{k}$'' is immaterial for our threshold statement: the hard instances themselves are $C_{4}$-free $(k,2)$-biregular subdivision graphs with $\Delta=k$.
\end{remark}

\subsection{Edge-colouring background}

Deciding whether $\chi'(G)=\Delta(G)$ for regular graphs is NP-complete.
Holyer~\cite{Holyer1981} proved NP-completeness of $3$-edge-colouring of cubic simple graphs; Leven and Galil~\cite{LevenGalil1983} extended it to every fixed $k\ge 3$:

\begin{theorem}[Holyer--Leven--Galil]\label{thm:edge-colour}
Fix $k\ge 3$. The following problem is NP-complete: given a simple $k$-regular graph $Q$, decide whether $\chi'(Q)=k$ (i.e.\ whether $Q$ is Class~1) versus $\chi'(Q)=k+1$.
\end{theorem}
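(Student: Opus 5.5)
This result is classical and the paper uses it as a black box, so the plan below is a reconstruction in the spirit of Holyer~\cite{Holyer1981} and Leven--Galil~\cite{LevenGalil1983}, not a new proof. \emph{Membership in NP} is immediate. A certificate is a map $c\colon E(Q)\to\setK$, and one checks in time $O(|E(Q)|)$ that the $k$ edges at each vertex receive distinct colours. Note also that by Vizing's theorem the only alternative is $\chi'(Q)=k+1$, so the decision ``$\chi'(Q)=k$'' is the whole question.

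For \emph{NP-hardness} I would give a Karp reduction from a constraint problem that is NP-complete and maps naturally onto edge colours. For $k=3$ I would follow Holyer and reduce from 3-SAT. For general $k$ I would reduce from $k$-colourability of $k$-uniform hypergraphs, or equivalently a not-all-equal-type satisfiability problem, in the style of Leven--Galil. The main tool is the \emph{parity lemma}. If $Q$ is $k$-regular and properly $k$-edge-coloured, and $X\subseteq V(Q)$, then for every colour $i$ the number of edges of the cut $\partial X$ coloured $i$ is congruent to $|\partial X|$ modulo $2$. This holds because each colour class is a perfect matching, and counting endpoints in $X$ gives $|X|\equiv m_i(\partial X)\pmod 2$ for every $i$. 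Hence the colour pattern on any small cut is tightly constrained.

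The key steps, in order, are as follows.
\begin{enumerate}
\item[(1)] Build an \emph{inverting} or \emph{replicating} component: a $k$-regular graph minus a few edges, with $2k$ (for $k=3$, four) pendant edges. By the parity lemma, in every $k$-edge-colouring these dangling edges are coloured in exactly one of two patterns, which encode a Boolean value. Both patterns must be realizable.
\item[(2)] Chain such components to form \emph{variable gadgets} that propagate one consistent truth value to all occurrences of the variable.
\item[(3)] Build a \emph{clause gadget}. Attaching it to the literal ends must be $k$-edge-colourable exactly when the clause (or hyperedge) constraint is satisfied; for $k\ge 4$ this means the $k$ incident ``signals'' are not forced into a forbidden configuration.
\item[(4)] Close off all remaining dangling edges with small fixed $k$-regular-completing gadgets, and make sure the result is simple.
\end{enumerate}
The resulting $Q$ is simple, $k$-regular, of polynomial size, and satisfies $\chi'(Q)=k$ if and only if the source instance is a yes-instance. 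Colourings of the assembled graph decompose into gadget colourings, because every interface is a cut whose colour pattern is pinned down by the parity lemma.

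The main obstacle is step (3) together with the correctness of step (1) for general $k$. One must verify exhaustively, up to permutations of colours, which boundary patterns each gadget admits. One must also ensure no ``spurious'' patterns allow a colouring of an unsatisfiable instance. For $k=3$ this is a finite case check on Holyer's inverting component. For general $k$ I would first try to reduce to the $k=3$ case by a degree-raising construction. The idea is to take $k-3$ extra perfect matchings supplied by disjoint Class~1 gadgets that are forced to use colours $4,\dots,k$. If the forcing fails, I would fall back on Leven--Galil's direct gadgets for hypergraph colouring. In this paper the theorem is only cited, so the plan simply confirms that the hypotheses used later match the cited statements: fixed $k\ge3$, simple and $k$-regular input.
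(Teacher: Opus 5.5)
Your NP-membership argument is correct, and for hardness you end up where the paper does. The paper gives no proof: it cites Holyer ($k=3$) and Leven--Galil (every fixed $k\ge 3$). Its one added justification is that Leven--Galil contain an explicit lemma eliminating multiple edges. That lemma is what licenses the word \emph{simple} in the statement, which the paper later needs for $C_4$-freeness of $S(Q)$. Your step (4) only asserts ``make sure the result is simple''. Your closing check that the hypotheses match should point to that lemma, not to your sketch.

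Read as independent support, the reconstruction has concrete errors.
\begin{itemize}
\item \textbf{Parity lemma.} As you state it, the lemma is false for even $k$. Your own count gives $m_i(\partial X)\equiv |X| \pmod 2$. Since $|\partial X|\equiv k|X|$, this agrees with $|\partial X|$ mod $2$ only for odd $k$. For example, a single vertex in a $4$-regular graph has $|\partial X|=4$ but $m_i=1$ for every $i$. So the formulation breaks exactly in the even-$k$ part of your general-$k$ plan.
\item \textbf{Inverting gadget for $k=3$.} A component with four dangling edges $a,b,c,d$ can replicate but cannot invert. The cut has even size, so every colour occurs an even number of times on it, which forces $a=b$ iff $c=d$. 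This is why Holyer's inverting component has a fifth dangling edge. Your count ``$2k$ (for $k=3$, four)'' is also inconsistent.
\item \textbf{Degree-raising route.} It fails as described, because nothing forces the added perfect matchings to become colour classes. For any simple cubic $G$, the simple $4$-regular graph $G\square K_2$ is $4$-edge-colourable. Colour both copies of $G$ identically with $4$ colours (Vizing), and give each rung the unique colour missing at its two endpoints. So joining two copies of a Class~2 cubic graph by a perfect matching already destroys Class~2.
\end{itemize}
Consequently the general-$k$ case rests entirely on the Leven--Galil citation, exactly as in the paper. The sketch should either be dropped or corrected along these lines.
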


The theorem is stated for simple $k$-regular graphs with $k$ fixed as part of the problem definition (not part of input). Holyer treats $k=3$; Leven--Galil prove it for every fixed $k\ge3$ and their construction explicitly contains a lemma eliminating multiple edges, so the simple-graph formulation is supported. The bipartite case is polynomial (K\"onig \cite{Konig1916}), so our hard family must use non-bipartite $Q$ (hence $S(Q)$ bipartite but $Q$ itself contains odd cycles when $k\ge3$ and Class~2).

We also recall ordinary rainbow domination \cite{BresarSum2007,HartnellRall2013} and independent rainbow domination \cite{KST2018,BK2019} for context; our focus is the per-colour independent variant with $|f(v)|\le1$, which is the one linked to $i(G\cardK)$.

\section{The trivial regime: \texorpdfstring{$k>\Delta$}{k > Delta}}\label{sec:trivial}

The regime $k>\Delta(G)$ is universal and trivial. No structural assumption is needed. Moreover the decision problem becomes trivial: $\gri(G)\le b$ iff $b\ge|V(G)|$.

\begin{theorem}[Trivial regime]\label{thm:trivial}
Let $G$ be a graph with $n=|V(G)|$ and $\Delta=\Delta(G)$. If $k>\Delta$ then
\[
\gri(G)=n.
\]
Consequently $i(G\cardK)=n$. In particular, for every fixed $k$, deciding whether $\gri(G)\le b$ (resp.\ $i(G\cardK)\le b$) when $k>\Delta(G)$ is polynomial-time (compare $b$ with $n$).
\end{theorem}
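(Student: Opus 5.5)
The plan is to show that when $k>\Delta$, no vertex can be labelled $0$ in any $k$-RiDF. Then I exhibit a $k$-RiDF in which every vertex gets a singleton, so the minimum weight is forced to be exactly $n$.

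\emph{Lower bound.} Let $f$ be any $k$-RiDF of $G$ and suppose some $v$ has $f(v)=\emptyset$. By Definition~\ref{def:ridf}(iii), $\bigcup_{u\in N(v)}f(u)=\setK$. By Definition~\ref{def:ridf}(i) each $f(u)$ has at most one element, so this union has at most $|N(v)|=\deg(v)\le\Delta<k$ elements. That contradicts its being all of $\setK$. Hence $V_{0}=\emptyset$, every vertex carries a singleton, and $\weight(f)=n$. So every $k$-RiDF has weight exactly $n$, and in particular $\gri(G)\ge n$.

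\emph{Upper bound.} It remains to show that a $k$-RiDF exists. Since $k>\Delta$, we have $k\ge\Delta+1$, so the greedy algorithm gives a proper vertex colouring $c\colon V(G)\to\setK$ ($\chi(G)\le\Delta+1$). Put $f(v)=\{c(v)\}$. Condition (i) holds trivially. Condition (ii) holds because colour classes of a proper colouring are independent. Condition (iii) is vacuous because no vertex is labelled $\emptyset$. Thus $\weight(f)=n$ and $\gri(G)\le n$.

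Combining the two bounds gives $\gri(G)=n$, and Proposition~\ref{prop:identity} gives $i(G\cardK)=n$. The algorithmic consequence is immediate: computing $n$ and comparing it with $b$ takes polynomial time, so $\gri(G)\le b$ holds iff $b\ge n$. The only point needing care is the existence step, and there greedy colouring with $\Delta+1\le k$ colours is exactly what is needed. The edge case $n=0$ is trivial.
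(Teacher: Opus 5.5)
Your proposal is correct and follows essentially the same route as the paper's proof. The bound $|\bigcup_{u\in N(v)}f(u)|\le\deg(v)\le\Delta<k$ forces $V_{0}=\emptyset$, and a greedy proper colouring with at most $\Delta+1\le k$ colours gives a $k$-RiDF of weight $n$. Proposition~\ref{prop:identity} then transfers the value to $i(G\cardK)$, and the decision problem reduces to comparing $b$ with $n$.
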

\begin{proof}
Let $f$ be any $k$-RiDF of $G$, with partition $(V_{0},\dots,V_{k})$.
Claim: $V_{0}=\emptyset$. Indeed, suppose $v\in V_{0}$.
Then by Definition~\ref{def:ridf}(iii), $\bigcup_{u\in N(v)}f(u)=\setK$.
Since $|f(u)|\le 1$, $|\bigcup_{u\in N(v)}f(u)|\le |N(v)|=\deg(v)\le\Delta<k$,
so the union cannot be all of $\setK$, contradiction. Hence no vertex can be labelled $0$ and every $k$-RiDF has $V_{0}=\emptyset$ and $\weight(f)=\sum_{i}|V_{i}|=n$. Thus $\gri(G)\ge n$ if a $k$-RiDF exists.

It remains to exhibit a $k$-RiDF of weight $n$.
Since $k>\Delta$, $G$ admits a proper $k$-colouring $c\colon V(G)\to\setK$:
greedily colour vertices in any order, each vertex sees at most $\Delta<k$ colours on its neighbours, so a free colour always exists (equivalently $\chi(G)\le\Delta+1\le k$).
Define $f(v)=\{c(v)\}$ for all $v$. Then each $V_{i}=c^{-1}(i)$ is independent, condition~(ii) holds per colour, $V_{0}=\emptyset$ so (iii) is vacuous, and $\weight(f)=n$. Hence $\gri(G)\le n$, so $\gri(G)=n$.
Via Proposition~\ref{prop:identity}, $i(G\cardK)=\gri(G)=n$.

Hence $\gri(G)\le b$ iff $n\le b$, and similarly for $i(G\cardK)$.
\end{proof}

\section{The boundary: hardness at \texorpdfstring{$k=\Delta$}{k = Delta}}\label{sec:boundary}

We now prove the main hardness theorem. Fix $k\ge 3$.

\begin{theorem}[Main theorem -- hardness at the boundary]\label{thm:main}
For every fixed $k\ge 3$, the language
\[
\cL_{k}=\{H\in\cS_{k}:\gri(H)\le |B|\}
\]
is NP-complete, where $H=S(Q)\in\cS_{k}$ has bipartition $(A,B)=(V(Q),E(Q))$, $\Delta(H)=k$, $H$ is $C_{4}$-free $(k,2)$-biregular, and $B=\{v:\deg(v)=2\}$, so $|B|=|\{v:\deg(v)=2\}|$ is computable from $H$.
Moreover,
\[
\gri(S(Q))\ge |E(Q)|\quad\text{and}\quad \gri(S(Q))=|E(Q)|\iff \chi'(Q)=k.
\]
Hence deciding $\gri(S(Q))=|E(Q)|$ (equivalently $\gri(S(Q))\le|E(Q)|$) is NP-complete under restriction $S(Q)\in\cS_{k}$, with membership in NP.
\end{theorem}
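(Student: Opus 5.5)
The plan is to prove the two structural claims first, namely the lower bound $\gri(S(Q))\ge|E(Q)|$ and the characterisation of equality. NP-completeness then follows from Theorem~\ref{thm:edge-colour} by the polynomial map $Q\mapsto S(Q)$, with membership in NP taken from Remark~\ref{rem:promise}.

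\emph{Lower bound.} Let $f$ be any $k$-RiDF of $H=S(Q)$ with partition $(V_0,\dots,V_k)$. Every $b\in B$ has degree $2$ in $H$. If $f(b)=0$, condition (iii) of Definition~\ref{def:ridf} would require all $k$ colours to appear among at most two singleton labels. This is impossible since $k\ge3>2$; it is the same counting as in Theorem~\ref{thm:trivial}, applied locally at degree-$2$ vertices. Hence every $b\in B$ is labelled, so $\weight(f)\ge|B|=|E(Q)|$. Note that a $k$-RiDF always exists (e.g.\ label all vertices via a proper colouring, since $\chi(H)\le2\le k$), so $\gri(S(Q))$ is well defined.

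\emph{Equality iff $\chi'(Q)=k$.} Suppose $\weight(f)=|E(Q)|$. By the lower-bound argument all of $B$ is labelled, so $f(a)=0$ for every $a\in A$. Each $a\in A$ has exactly $k$ neighbours $b_e$, one for each edge $e$ incident to $a$ in $Q$. By (iii) these $k$ labels must cover all of $\setK$, so they are pairwise distinct. Setting $c(e)=f(b_e)$ therefore gives a proper $k$-edge-colouring of $Q$, and so $\chi'(Q)=k$.

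Conversely, given a proper $k$-edge-colouring $c$ of $Q$, set $f(b_e)=c(e)$ and $f(a)=0$. Then (i) holds trivially. For (ii), each $V_i\subseteq B$, and $B$ is independent in $H$. For (iii), each $a$ sees $k$ distinct colours on its $k$ incident edges, i.e.\ all of $\setK$. The weight is $|E(Q)|$. Because of the lower bound, $\gri\le|E(Q)|$ and $\gri=|E(Q)|$ are equivalent.

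\emph{Complexity.} Given a simple $k$-regular $Q$, the subdivision $S(Q)$ is built in polynomial time. It lies in $\cS_k$ and is $C_4$-free by Lemma~\ref{lem:c4free}. The bound $|B|$ is the number of degree-$2$ vertices. By the equivalence just proved, $Q$ is Class~1 iff $S(Q)\in\cL_k$, so Theorem~\ref{thm:edge-colour} gives NP-hardness. A $k$-RiDF of weight at most $|B|$ is a polynomially checkable certificate, which gives membership in NP.

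The only delicate step is the observation that degree-$2$ vertices can never be unlabelled when $k\ge3$. Everything else is direct bookkeeping.
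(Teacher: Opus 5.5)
Your proposal is correct and follows essentially the same route as the paper: the degree-$2$ counting argument (the paper's Observation~\ref{obs:degk} and Lemma~\ref{lem:Bnonzero}) gives the lower bound, weight-$|B|$ functions correspond exactly to proper $k$-edge-colourings of $Q$, and the reduction $Q\mapsto S(Q)$ from Theorem~\ref{thm:edge-colour}, together with the certificate argument, yields NP-completeness. Your extra remark that a $k$-RiDF always exists, via a proper $2$-colouring of the bipartite graph $H$, is a small well-definedness point that the paper leaves implicit.
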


The theorem says hardness already appears at the threshold $k=\Delta(H)$ in an existential sense: there exists a hard subfamily $\cS_{k}$ with $\Delta(H)=k$ exactly. It does not claim every graph with $k=\Delta$ is hard.

We prove it via edge-colouring. Recall $A=V(Q)$, $B=E(Q)$, each $b_{e}\in B$ has exactly two neighbours in $A$, the endpoints of $e$ in $Q$; each $a\in A$ is adjacent to the $k$ vertices $b_{e}$ with $e\ni a$.

\begin{observation}\label{obs:degk}
Let $k\ge1$, $G$ a graph and $f$ any $k$-RiDF of $G$. If $v\in V(G)$ satisfies $\deg(v)<k$ then $f(v)\neq0$. Indeed, if $f(v)=0$ then $\bigcup_{u\in N(v)}f(u)=\setK$ requires $|\bigcup_{u\in N(v)}f(u)|\le |N(v)|=\deg(v)<k$, impossible.
\end{observation}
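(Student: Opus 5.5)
The plan is a direct counting argument from condition~(iii) of Definition~\ref{def:ridf}, combined with condition~(i). It is the same argument already used inside the proof of Theorem~\ref{thm:trivial}, now localised to a single vertex. There we assumed $\deg(v)\le\Delta<k$ for all $v$; here we only need $\deg(v)<k$ for the particular vertex in question.

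I would argue by contradiction. Fix a $k$-RiDF $f$ and a vertex $v$ with $\deg(v)<k$, and suppose $f(v)=0$, i.e.\ $v\in V_{0}$. Condition~(iii) then demands $\bigcup_{u\in N(v)}f(u)=\setK$, so this union has exactly $k$ elements. On the other hand, condition~(i) gives $|f(u)|\le 1$ for every $u$. Subadditivity of cardinality under unions then yields
\[
\Bigl|\bigcup_{u\in N(v)}f(u)\Bigr|\le\sum_{u\in N(v)}|f(u)|\le |N(v)|=\deg(v)<k .
\]
This contradicts the requirement that the union equals $\setK$, so $f(v)\neq 0$, that is, $f(v)=\{i\}$ for some $i\in\setK$.

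There is no real obstacle. The one point to be careful about is that it is condition~(i), the restriction to labels of size at most one, that makes the bound $\deg(v)$ valid. Condition~(ii), per-colour independence, plays no role here. In the intended application to $S(Q)$, each vertex of $B$ has degree $2<k$ since $k\ge 3$. The observation therefore forces every $b\in B$ to carry a nonzero label, which immediately gives the lower bound $\gri(S(Q))\ge|B|=|E(Q)|$ in Theorem~\ref{thm:main}.
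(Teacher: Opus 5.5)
Your proof is correct and is essentially the paper's own argument: assume $f(v)=0$, invoke condition~(iii), and use $|f(u)|\le 1$ from condition~(i) to bound the size of the union by $\deg(v)<k$. You also write out the subadditivity step explicitly and note that per-colour independence (ii) is not needed, which matches the paper's remark in Lemma~\ref{lem:Bnonzero}.
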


\begin{lemma}[$B$-nonzero]\label{lem:Bnonzero}
Let $H=S(Q)$ with $Q$ $k$-regular, $k\ge 3$, and let $f$ be any $k$-RiDF of $H$. Then $f(b)\neq0$ for every $b\in B$.
\end{lemma}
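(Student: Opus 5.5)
This follows directly from Observation~\ref{obs:degk}. By Definition~\ref{def:subdiv}, every $b=b_{e}\in B$ has exactly two neighbours in $H=S(Q)$, namely the two endpoints of $e$ in $Q$. Hence $\deg_{H}(b)=2$. Since $k\ge3$, we have $\deg_{H}(b)=2<k$, so Observation~\ref{obs:degk} applied to $v=b$ gives $f(b)\neq0$ for every $k$-RiDF $f$ of $H$.

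To keep the argument self-contained, I would also spell out the counting step behind the observation. Suppose $f(b)=0$. Then Definition~\ref{def:ridf}(iii) requires
\[
\bigcup_{u\in N(b)}f(u)=\setK .
\]
By Definition~\ref{def:ridf}(i), each $f(u)$ contributes at most one colour. So the union has size at most $|N(b)|=2<3\le k$, which is a contradiction.

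Neither the $k$-regularity of $Q$ nor simplicity is used here. Only the degree of the subdivision vertices and the hypothesis $k\ge3$ matter. There is no real obstacle. The only point needing care is that the bound $k\ge3$ is essential: for $k=2$ a $B$-vertex could indeed receive $0$ by seeing both colours on its two endpoints.
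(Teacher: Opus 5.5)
Your proposal is correct and follows essentially the same route as the paper: apply Observation~\ref{obs:degk} with $\deg_H(b)=2<k$, and spell out that $f(b)=0$ would force $\bigl|\bigcup_{u\in N(b)}f(u)\bigr|\le 2<k$. Your side remarks also match the paper, namely that independence and $k$-regularity play no role and that the argument fails for $k=2$ (Remark~\ref{rem:k2}).
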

\begin{proof}
This is immediate from Observation~\ref{obs:degk}: $\deg_{H}(b)=2<k$ for $k\ge3$, so $b$ cannot be $0$. Concretely, if $f(b)=0$ then $\bigcup_{u\in N_{H}(b)}f(u)=\setK$ but $|N_{H}(b)|=2$ and $|f(u)|\le1$, so $|\bigcup_{u\in N(b)}f(u)|\le2<k$, impossible. No independence assumption is needed.
\end{proof}

\begin{corollary}\label{cor:lowerbound}
For $H=S(Q)$ as above, any $k$-RiDF has weight at least $|B|=|E(Q)|$, so $\gri(H)\ge|B|$.
\end{corollary}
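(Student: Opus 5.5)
The plan is to read Corollary~\ref{cor:lowerbound} off Lemma~\ref{lem:Bnonzero} in one line. By Definition~\ref{def:ridf}(i), every vertex $v$ has $|f(v)|\le 1$, and the weight is $\weight(f)=\sum_{v\in V(H)}|f(v)|$.

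Split this sum over the bipartition $V(H)=A\cup B$ and discard the $A$-part, which is a sum of nonnegative terms. This gives $\weight(f)\ge\sum_{b\in B}|f(b)|$.

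Now apply Lemma~\ref{lem:Bnonzero}: for $k\ge3$, every $b\in B$ has $f(b)\neq0$. Since $|f(b)|\le1$, this means $|f(b)|=1$. Hence $\sum_{b\in B}|f(b)|=|B|$, and $|B|=|E(Q)|$ by Definition~\ref{def:subdiv}.

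Since this holds for every $k$-RiDF $f$, taking the minimum gives $\gri(H)\ge|B|$. We may also note that the minimum is over a nonempty set: assigning each vertex its colour from a proper vertex colouring with at most $\Delta+1$ colours, when that is at most $k$, yields a $k$-RiDF. This is not needed for the inequality itself, since a minimum over an empty set would be $+\infty$.

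There is no real obstacle here. The only point needing care is that the hypothesis $k\ge3$ is exactly what Lemma~\ref{lem:Bnonzero} requires, because $\deg_H(b)=2<k$.
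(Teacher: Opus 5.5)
Your proof is correct and matches the paper's: Lemma~\ref{lem:Bnonzero} together with $|f(b)|\le1$ forces each $b\in B$ to contribute exactly $1$, while vertices of $A$ contribute nonnegatively, so $\weight(f)\ge|B|$. Your side remark on nonemptiness does not apply as phrased, since here $\Delta(H)=k$ gives $\Delta+1>k$; this is harmless because you flag it as unnecessary, and in any case $H$ is bipartite, so a proper $2$-colouring already yields a $k$-RiDF.
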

\begin{proof}
By Lemma~\ref{lem:Bnonzero} each $b\in B$ contributes at least $1$ to the weight. Vertices in $A$ contribute non-negatively, so $\weight(f)\ge|B|$.
\end{proof}

We characterize equality. The characterization depends crucially on $k$-regularity: each $a\in A$ is incident to exactly $k$ edges, so its $k$ neighbours in $H$ can cover $\setK$ iff their colours are pairwise distinct. For non-regular $Q$ the statement differs.

\begin{lemma}[Equality characterization]\label{lem:equality}
Let $H=S(Q)$ with $Q$ simple $k$-regular, $k\ge3$. Then $\gri(H)=|B|$ iff $Q$ is $k$-edge-colourable ($\chi'(Q)=k$). More precisely, $k$-RiDFs of weight $|B|$ are in bijection with proper $k$-edge-colourings of $Q$ (with fixed colour names $\setK$).
\end{lemma}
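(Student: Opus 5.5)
We prove the bijection directly; the equivalence $\gri(H)=|B|\iff\chi'(Q)=k$ then follows from Corollary~\ref{cor:lowerbound}, since a weight-$|B|$ function exists iff a proper $k$-edge-colouring exists.

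\emph{From a weight-$|B|$ function to a colouring.} Let $f$ be a $k$-RiDF of $H$ with $\weight(f)=|B|$. By Lemma~\ref{lem:Bnonzero} every $b\in B$ has $|f(b)|=1$, and these already contribute $|B|$ to the weight. Hence $f(a)=0$ for every $a\in A$. Define $c(e)=i$ where $f(b_e)=\{i\}$. Each $a\in A$ is labelled $0$, so by Definition~\ref{def:ridf}(iii) its $H$-neighbourhood covers $\setK$. This neighbourhood consists of exactly the $k$ vertices $b_e$ with $e\ni a$, by $k$-regularity of $Q$. A set of $k$ singletons covering $[k]$ must consist of pairwise distinct singletons. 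Thus the $k$ edges at $a$ receive distinct colours, and $c$ is a proper $k$-edge-colouring of $Q$.

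\emph{From a colouring to a weight-$|B|$ function.} Let $c$ be a proper $k$-edge-colouring of $Q$. Set $f(b_e)=\{c(e)\}$ and $f(a)=0$ for $a\in A$. Condition (i) of Definition~\ref{def:ridf} is immediate. For condition (ii), $B$ is an independent set in the bipartite graph $H$, and each class $V_i$ lies inside $B$, so each $V_i$ is independent; note that properness of $c$ is not needed here. For condition (iii), each $a\in A$ has $k$ incident edges carrying $k$ distinct colours, so its neighbours show every colour of $\setK$. The weight of $f$ is $|B|$.

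\emph{Bijection.} The two maps are mutually inverse. A weight-$|B|$ function is determined by its values on $B$, because it is forced to be $0$ on $A$. Conversely, the colouring is read off from those values on $B$. Hence the correspondence is a bijection with fixed colour names.

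The only step needing care is the forcing in the first direction: the weight equality must force all of $A$ to be $0$, and $k$-regularity must turn rainbow coverage at each $a$ into distinctness of colours. Both follow by counting, using Lemma~\ref{lem:Bnonzero}.
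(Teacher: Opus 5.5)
Your proof is correct and follows essentially the same route as the paper. You use Lemma~\ref{lem:Bnonzero} and the weight count to force $f(a)=0$ on $A$, turn rainbow coverage at each degree-$k$ vertex into distinctness of the $k$ singleton colours, and use bipartiteness of $H$ for per-colour independence in the converse; your explicit check that the two maps are mutually inverse spells out the bijection claim, which the paper leaves implicit.
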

\begin{proof}
($\Rightarrow$) Suppose $f$ is a $k$-RiDF of $H$ with $\weight(f)=|B|$.
By Corollary~\ref{cor:lowerbound} every $b\in B$ contributes exactly $1$ and every $a\in A$ contributes $0$: otherwise weight would exceed $|B|$.
Thus
\[
f(b)=\{c(b)\}\ \text{for a colour }c(b)\in\setK,\quad f(a)=0\ \forall a\in A,
\]
and $|f(b)|=1$.

Independence: the support of $f$ is $B$, which is independent in $H$ (since $H$ is bipartite with parts $A,B$), and indeed each $V_{i}\subseteq B$ is independent automatically. So (ii) holds for free at equality.

Define $c\colon E(Q)\to\setK$ by $c(e)=$ the unique $i$ with $f(b_{e})=\{i\}$.
We claim $c$ is a proper $k$-edge-colouring: for each $a\in A$, the $k$ incident edges $e_{1},\dots,e_{k}$ must receive pairwise distinct colours.
Indeed, $a\in A$ has $f(a)=0$, so by (iii), $\bigcup_{b\in N_{H}(a)}f(b)=\setK$.
But $N_{H}(a)=\{b_{e}:e\ni a\}$ has size $k$, each $f(b_{e})$ is a singleton.
The union has size $k$ (must be all of $\setK$) iff the $k$ singletons are pairwise distinct. Hence $c$ assigns distinct colours to edges sharing a vertex $a$. We used that $a$ has exactly $k$ incident edges; distinctness is equivalent to covering $\setK$.
Thus $c$ is proper and uses $k$ colours, so $\chi'(Q)=k$.

($\Leftarrow$) Conversely, let $c\colon E(Q)\to\setK$ be a proper $k$-edge-colouring.
Define $f\colon V(H)\to2^{\setK}$ by $f(b_{e})=\{c(e)\}$ for $b_{e}\in B$ and $f(a)=0$ for $a\in A$. Then $\weight(f)=|B|$. Since $B$ is independent in $H$ ($H$ is bipartite with parts $A,B$), each $V_{i}=\{b_{e}:c(e)=i\}\subseteq B$ is independent automatically, so (ii) holds. Domination: every $b\in B$ is non-zero, so needs no domination; every $a\in A$ has $f(a)=0$ and sees $k$ neighbours $b_{e}$ with $e\ni a$, whose colours are $\{c(e):e\ni a\}=\setK$ by properness and $k$-regularity (each $a$ incident to $k$ edges of distinct colours, hence all colours appear). Thus $\bigcup_{b\in N(a)}f(b)=\setK$. So $f$ is a $k$-RiDF of weight $|B|$, giving $\gri(H)\le|B|$, and with Corollary~\ref{cor:lowerbound} equality.
\end{proof}

\begin{proof}[Proof of Theorem~\ref{thm:main}]
We have shown $\gri(S(Q))\ge|E(Q)|$ and equality characterizes $k$-edge-colourability (Lemmas~\ref{lem:Bnonzero}--\ref{lem:equality}). The graph $H=S(Q)$ satisfies $\Delta(H)=k$, is bipartite $(A,B)$, $(k,2)$-biregular, and $C_{4}$-free when $Q$ is simple (Lemma~\ref{lem:c4free}) with girth $\ge6$.

NP-membership of ``$\gri(H)\le|B|$'' on $\cS_{k}$ is clear (certificate $f$); moreover $\cS_{k}$ is polynomial-time recognizable (Remark~\ref{rem:promise}), and $|B|$ is computable from $H$.

NP-hardness: fix $k\ge3$. By Theorem~\ref{thm:edge-colour} (Holyer \cite{Holyer1981} for $k=3$, Leven--Galil \cite{LevenGalil1983} for every fixed $k\ge3$), deciding whether a simple $k$-regular graph $Q$ satisfies $\chi'(Q)=k$ is NP-complete. The reduction maps $Q\mapsto H=S(Q)$, constructible in polynomial time ($|V(H)|=|V(Q)|+|E(Q)|=|V(Q)|(1+k/2)$). By Lemma~\ref{lem:equality},
\[
\chi'(Q)=k \iff \gri(S(Q))=|E(Q)| \iff \gri(S(Q))\le|E(Q)|,
\]
where the last equivalence uses $\gri\ge|E(Q)|$.
Thus $Q$ is a yes-instance of $k$-edge-colouring iff $S(Q)$ is a yes-instance of $\cL_{k}$. This is a Karp reduction, so $\cL_{k}$ is NP-hard. With membership, NP-complete. The equivalence preserves $C_{4}$-freeness and $(k,2)$-biregularity.
\end{proof}

\begin{remark}[Girth and $C_{4}$-freeness]
If $Q$ has parallel edges, $S(Q)$ contains a $4$-cycle $a$--$b_{e}$--$a'$--$b_{e'}$--$a$. Restricting to simple $Q$ eliminates $C_{4}$ (Lemma~\ref{lem:c4free}). Since the Holyer--Leven--Galil hardness holds for simple $k$-regular graphs, we obtain hardness even for $C_{4}$-free instances. In fact $S(Q)$ is the incidence (Levi) graph of $Q$, also called the subdivision or incidence bipartite graph.
\end{remark}

\begin{remark}[Why $k\ge3$ and the case $k=2$]\label{rem:k2}
Lemma~\ref{lem:Bnonzero} used $\deg(b)=2<k$ via Observation~\ref{obs:degk}. For $k=2$, $b$ has degree $2=k$ and could be $0$ with its two neighbours covering $\{1,2\}$. The present reduction therefore does not apply when $k=2$. The complexity of $2$-RiDF on $\cS_{2}$ requires separate treatment and is not claimed here; we make no assertion that $2$-RiDF on $S(Q)$ is polynomial or NP-complete, only that the $k\ge3$ construction fails.
\end{remark}

\section{The excess parameter}\label{sec:excess}

When $Q$ is Class~2 ($\chi'(Q)=k+1$), $\gri(S(Q))>|E(Q)|$. We quantify the excess by first introducing a combinatorial parameter $\mu_{k}(Q)$ and proving $\gri(S(Q))=|E(Q)|+\mu_{k}(Q)$; the excess $\delta_{k}(Q)$ is then defined as the difference.

\begin{theorem}[Representation theorem]\label{thm:representation}
Let $k\ge 3$ and $Q$ be simple $k$-regular, $H=S(Q)$ with bipartition $(A,B)$.
Define
\[
\mu_{k}(Q)=\min\{|X|: \exists\,c\colon E(Q)\to\setK,\ \ell\colon X\to\setK\text{ such that (i) and (ii) hold}\},
\]
where for $X\subseteq V(Q)=A$:
\begin{itemize}
\item[(i)] for every $a\in A\setminus X$, $\{c(e):e\ni a\}=\setK$ (i.e.\ the $k$ incident edges have pairwise distinct colours);
\item[(ii)] for every $x\in X$, $c(e)\neq\ell(x)$ for every $e\ni x$ (i.e.\ $\ell(x)\notin\{c(e):e\ni x\}$; $x$ misses at least its label colour).
\end{itemize}
Then
\[
\gri(H)=|E(Q)|+\mu_{k}(Q).
\]
Consequently, defining $\delta_{k}(Q)=\gri(S(Q))-|E(Q)|$, we have $\delta_{k}(Q)=\mu_{k}(Q)\ge0$ and $\delta_{k}(Q)=0\iff\chi'(Q)=k$. In particular $0\le\delta_{k}(Q)\le|V(Q)|$.
\end{theorem}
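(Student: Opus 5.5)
We prove the two inequalities $\gri(H)\le|E(Q)|+\mu_{k}(Q)$ and $\gri(H)\ge|E(Q)|+\mu_{k}(Q)$ by translating directly between $k$-RiDFs of $H$ and pairs $(c,\ell)$ as in the definition of $\mu_{k}(Q)$. Throughout we rely on Lemma~\ref{lem:Bnonzero}: every $b\in B$ carries a singleton label, so any $k$-RiDF $f$ induces an edge colouring $c_f(e)=$ the colour of $f(b_e)$. We also use that $H$ is bipartite with parts $(A,B)$, so edges of $H$ only join $a\in A$ to $b_e$ with $e\ni a$.

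\emph{Upper bound.} Take an optimal pair $(c,\ell)$ on $X$ with $|X|=\mu_k(Q)$, and define
\begin{itemize}
\item[] $f(b_e)=\{c(e)\}$ for $b_e\in B$,
\item[] $f(x)=\{\ell(x)\}$ for $x\in X$,
\item[] $f(a)=\emptyset$ for $a\in A\setminus X$.
\end{itemize}
\emph{Per-colour independence:} $A$ and $B$ are each independent, so the only possible monochromatic edge is $x b_e$ with $e\ni x$. Condition (ii) rules this out, since it gives $c(e)\ne\ell(x)$.
\emph{Domination:} every $a\in A\setminus X$ sees $\{c(e):e\ni a\}=\setK$ by (i), and every other vertex is labelled.
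The weight is $|E(Q)|+|X|$.

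\emph{Lower bound.} Take any $k$-RiDF $f$ of $H$. Let $X=\{a\in A: f(a)\neq\emptyset\}$, let $\ell(x)$ be the label of $x$, and let $c=c_f$. Then $\weight(f)=|B|+|X|$, again by Lemma~\ref{lem:Bnonzero}.
\emph{Condition (i):} each unlabelled $a$ has exactly $k$ neighbours $b_e$, whose singleton labels must cover $\setK$, so they are pairwise distinct. This is the same counting as in Lemma~\ref{lem:equality}.
\emph{Condition (ii):} per-colour independence on the edges $x b_e$ forces $c(e)\ne\ell(x)$.
Hence $(c,\ell)$ is feasible for $\mu_k$, giving $\weight(f)\ge|E(Q)|+\mu_k(Q)$. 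Minimising over $f$ yields the identity.

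\emph{Consequences.} Nonnegativity is trivial. For $\delta_k=0\iff\chi'(Q)=k$: $X=\emptyset$ is feasible exactly when some $c$ is proper at every vertex, i.e.\ when $Q$ is $k$-edge-colourable. The same fact also follows from Lemma~\ref{lem:equality}.

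\emph{The bound $\mu_k\le|V(Q)|$.} This is the one step needing a construction, and it is the main (mild) obstacle. We take $X=A$, so (i) is vacuous, and we need a colouring $c$ together with labels such that each vertex $x$ misses some colour, which we then use as $\ell(x)$. By Vizing's theorem, take a proper $(k+1)$-edge-colouring of $Q$ and recolour every edge of colour $k+1$ with colour $1$. At a vertex $x$:
\begin{itemize}
\item[] if $x$ has no edge of colour $k+1$, its $k$ edges use $k$ of the $k+1$ colours. Since colour $k+1$ is among the colours used by the other edges of $x$ only if present, and here it is absent, $x$ uses all of $1,\dots,k$. After recolouring, a colour is missing only if $x$ had some edge of colour $k+1$.
\item[] if $x$ has an edge of colour $k+1$, then some colour $j\in\setK$ is absent at $x$ in the original colouring. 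Recolouring $k+1\mapsto 1$ only adds colour $1$, so any such $j\neq 1$ stays missing.
\end{itemize}
This simple recolouring can therefore fail at vertices that originally missed only colour $1$, or that originally missed colour $k+1$. So we instead argue directly, avoiding vertices that see all colours altogether. Since $k\ge3$, we colour every edge with colour $1$ and set $\ell\equiv 2$. Then every $x$ misses colour $2$, condition (ii) holds, and (i) is vacuous with $X=A$. Hence $\mu_k(Q)\le|A|=|V(Q)|$, and the corresponding function is a valid $k$-RiDF by the upper-bound argument.
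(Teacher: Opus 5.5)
Your proposal is correct and follows the paper's proof essentially verbatim. You use the same correspondence between $k$-RiDFs and triples $(c,X,\ell)$ via Lemma~\ref{lem:Bnonzero}, with per-colour independence giving (ii) and domination of unlabelled $A$-vertices giving (i), and the same all-colour-$1$, $\ell\equiv 2$ construction (the paper's Lemma~\ref{lem:upperbound}) for $\mu_k(Q)\le|V(Q)|$. The Vizing detour can simply be deleted; it does not actually fail for this bound, since vertices left with all $k$ colours can be placed outside $X$, and $|X|\le|A|$ is automatic once any feasible triple exists.
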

\begin{proof}
Let $f$ be any $k$-RiDF of $H=S(Q)$. By Lemma~\ref{lem:Bnonzero}, $f(b)\neq0$ for all $b\in B$, so each $b$ contributes exactly $1$ (recall $|f(b)|\le 1$): $f(b_{e})=\{c(e)\}$ for some $c(e)\in\setK$ defining $c\colon E(Q)\to\setK$.
Let $X=\{a\in A: f(a)\neq0\}$; for $a\in X$, $f(a)=\{\ell(a)\}$ for some $\ell(a)\in\setK$, and $f(a)=0$ for $a\notin X$.
Weight: $\weight(f)=|B|+|X|=|E(Q)|+|X|$.

We analyse the conditions on $(c,X,\ell)$ imposed by $f$ being a $k$-RiDF.

Independence: each $V_{i}=\{b_{e}:c(e)=i\}\cup\{a\in X:\ell(a)=i\}$ must be independent. Since $B$ is independent and $A$ is independent, the only possible edge inside $V_{i}$ is $a$--$b_{e}$ with $a\in X$, $e\ni a$, $c(e)=i=\ell(a)$.
Hence per-colour independence is equivalent to
\[
c(e)\neq\ell(x)\quad\text{for every }x\in X\text{ and }e\ni x. \tag{$\ast$}
\]

Domination: vertices $b\in B$ and $a\in X$ are non-zero, so need no domination.
For $a\in A\setminus X$ we have $f(a)=0$, so Definition~\ref{def:ridf}(iii) requires $\bigcup_{b\in N_{H}(a)}f(b)=\setK$, i.e.
\[
\{c(e): e\ni a\}=\setK. \tag{$\dagger$}
\]
This uses that $a$ has exactly $k$ incident edges: the union has size $k$ (all colours) iff the $k$ singletons are distinct. No condition is imposed on $a\in X$ (already dominated by being non-zero).
Thus every $k$-RiDF yields a triple satisfying ($\dagger$) for $a\notin X$ and ($\ast$) for $x\in X$, with weight $|E|+|X|$.

Conversely, given $(c,X,\ell)$ satisfying ($\dagger$) and ($\ast$), define $f(b_{e})=\{c(e)\}$, $f(a)=0$ for $a\notin X$, $f(a)=\{\ell(a)\}$ for $a\in X$.
Then ($\ast$) ensures each $V_{i}$ is independent, ($\dagger$) ensures every $a\notin X$ sees all $k$ colours, and all other vertices are non-zero. Hence $f$ is a $k$-RiDF with weight $|B|+|X|$.

Therefore $\gri(H)=\min_{f}\weight(f)=|B|+\min|X|=|E(Q)|+\mu_{k}(Q)$ with $\mu_{k}(Q)$ as defined. The minimum is over triples satisfying exactly the two bullet conditions above; the stronger exact-miss condition $\{c(e):e\ni x\}=\setK\setminus\{\ell(x)\}$ is \emph{not} required. A labelled vertex may miss several colours, e.g.\ all incident edges could share one colour different from $\ell(x)$, which still satisfies ($\ast$) and is a valid local configuration.

Defining $\delta_{k}(Q)=\gri(S(Q))-|E(Q)|$ we obtain $\delta_{k}(Q)=\mu_{k}(Q)$. By Corollary~\ref{cor:lowerbound}, $\delta_{k}(Q)\ge0$, and $\delta_{k}(Q)=0\iff\chi'(Q)=k$ by Theorem~\ref{thm:main}.
\end{proof}

The bound $\delta_{k}(Q)\le|V(Q)|$ is witnessed by an explicit construction.

\begin{lemma}[Upper bound]\label{lem:upperbound}
For $k\ge3$ and $Q$ simple $k$-regular, $\delta_{k}(Q)\le|V(Q)|$. More precisely, there exists a $k$-RiDF of $S(Q)$ of weight $|E(Q)|+|V(Q)|$.
\end{lemma}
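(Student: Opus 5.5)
We exhibit a triple $(c,X,\ell)$ with $X=A=V(Q)$ satisfying the conditions of Theorem~\ref{thm:representation}, and then invoke that theorem (or, equivalently, its explicit converse construction) to obtain a $k$-RiDF of $S(Q)$ of weight $|E(Q)|+|V(Q)|$. Taking $X=A$ makes condition~(i) vacuous, since there are no unlabelled vertices in $A$. The only requirement left is $(\ast)$: each vertex $x$ must carry a label $\ell(x)$ that differs from the colour of every edge incident with $x$.

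The simplest choice is the constant colouring $c(e)=1$ for all $e\in E(Q)$, together with the constant label $\ell(a)=2$ for all $a\in A$. This uses $k\ge 3\ge 2$, so colour $2$ exists. Then $c(e)=1\neq 2=\ell(x)$ for every $x$ and every $e\ni x$, so $(\ast)$ holds. This is precisely the ``labelled vertex may miss several colours'' configuration permitted in the proof of Theorem~\ref{thm:representation}.

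Concretely, we define $f(b_e)=\{1\}$ for all $b_e\in B$ and $f(a)=\{2\}$ for all $a\in A$. We then check the conditions of Definition~\ref{def:ridf} directly.
\begin{itemize}
\item Condition~(i) holds because every label is a singleton.
\item For condition~(ii), note $V_1=B$ and $V_2=A$, and both are independent because $H$ is bipartite with parts $A,B$. The remaining $V_i$ are empty.
\item Condition~(iii) is vacuous because $V_0=\emptyset$.
\end{itemize}
The weight is $|A|+|B|=|V(Q)|+|E(Q)|$. Therefore $\gri(S(Q))\le |E(Q)|+|V(Q)|$, and so $\delta_k(Q)\le |V(Q)|$. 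Combined with Corollary~\ref{cor:lowerbound}, this gives $0\le\delta_k(Q)\le|V(Q)|$.

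There is no real obstacle in this argument. The only point to watch is that the two colour classes must be distinct colours, which is why $k\ge2$ is needed. Since the lemma assumes $k\ge3$, this is automatic.
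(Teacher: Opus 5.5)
Your proposal is correct and follows essentially the same route as the paper: colour every edge $1$, label every vertex of $A$ with $2$, and observe that $V_1=B$ and $V_2=A$ are independent by bipartiteness, so the resulting $k$-RiDF has weight $|E(Q)|+|V(Q)|$. Your direct check of Definition~\ref{def:ridf} is the same verification the paper carries out.
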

\begin{proof}
Assign every edge of $Q$ colour $1$: $c(e)=1$ for all $e\in E(Q)$. Let $X=V(Q)=A$ and $\ell(a)=2$ for every $a\in A$ (any colour $\neq1$ works; $k\ge3$ ensures a second colour exists). Then condition ($\ast$) holds because $c(e)=1\neq2=\ell(a)$ for $e\ni a$. Condition ($\dagger$) is vacuous since $A\setminus X=\emptyset$. Each $V_{1}=\{b_{e}:e\in E(Q)\}\subseteq B$ and $V_{2}=A$ are independent ($A$ and $B$ are the bipartition), and $V_{i}=\emptyset$ for $i\ge3$, so per-colour independence holds. Every $a\in A$ is labelled ($\in X$) hence needs no domination; every $b\in B$ is labelled. Thus the corresponding $f$ (Theorem~\ref{thm:representation}) is a $k$-RiDF with weight $|B|+|X|=|E(Q)|+|V(Q)|$. Hence $\gri(S(Q))\le|E(Q)|+|V(Q)|$ and $\delta_{k}(Q)\le|V(Q)|$.
\end{proof}

\begin{corollary}[Specialization $k=3$]\label{cor:k3}
For cubic $Q$ ($k=3$), $\gri(S(Q))=|E(Q)|+\delta_{3}(Q)$ where $\delta_{3}(Q)=0$ iff $Q$ is $3$-edge-colourable (Class~1). For a snark (bridgeless cubic Class~2), $\delta_{3}(Q)\ge1$, and $k$-RiDFs correspond to $3$-edge-assignments where vertices outside $X$ are properly $3$-edge-coloured locally and vertices in $X$ are deficient: none of their incident edges uses the label colour $\ell(x)$ (at least one colour is missed; possibly more).
\end{corollary}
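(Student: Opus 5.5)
The corollary is the case $k=3$ of Theorem~\ref{thm:representation}, combined with Theorem~\ref{thm:main} and Lemma~\ref{lem:equality}, so the plan is to specialise those results rather than argue afresh. Set $k=3$, so $Q$ is a simple cubic graph and $H=S(Q)\in\cS_{3}$. Theorem~\ref{thm:representation} then gives $\gri(S(Q))=|E(Q)|+\mu_{3}(Q)$, and $\delta_{3}(Q)=\mu_{3}(Q)$ by definition, which is the displayed identity. Lemma~\ref{lem:equality} (equivalently the last clause of Theorem~\ref{thm:representation}) gives $\delta_{3}(Q)=0$ if and only if $\chi'(Q)=3$, i.e.\ $Q$ is Class~1.

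For a snark, $Q$ is cubic and Class~2 by definition, so $\chi'(Q)=4\neq 3$. The equivalence just obtained forces $\delta_{3}(Q)\neq0$, and $\delta_{3}(Q)\ge0$ by Corollary~\ref{cor:lowerbound}. Since $\delta_{3}$ is an integer, $\delta_{3}(Q)\ge1$. Bridgelessness plays no role; only Class~2 is used.

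The description of $3$-RiDFs comes from the bijection in the proof of Theorem~\ref{thm:representation}. Every $3$-RiDF $f$ of $S(Q)$ gives a triple $(c,X,\ell)$: $c(e)$ is the label of $b_{e}$, $X$ is the set of labelled vertices of $A$, and $\ell$ records their labels. Conversely every triple satisfying $(\dagger)$ and $(\ast)$ gives a $3$-RiDF.
\begin{itemize}
\item Condition $(\dagger)$ at $a\notin X$ says the three incident edges of $a$ carry all of $[3]$. Because $a$ has exactly three incident edges, this is the same as the edges being pairwise distinct, i.e.\ locally proper.
\item Condition $(\ast)$ at $x\in X$ says $\ell(x)\notin\{c(e):e\ni x\}$. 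So $x$ misses at least its label colour and may miss more.
\end{itemize}
The remark after the proof of Theorem~\ref{thm:representation} already notes that the exact-miss condition is not required, which justifies ``possibly more''.

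There is no real obstacle here. The only points needing care are that $k=3$ satisfies the standing hypothesis $k\ge3$ of Lemma~\ref{lem:Bnonzero}, so that every $b\in B$ is labelled, and that the snark's Class~2 status is exactly the condition ruling out $\delta_{3}=0$.
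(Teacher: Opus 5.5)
Your proposal is correct and matches the paper, which gives no separate proof and presents the corollary as the $k=3$ case of Theorem~\ref{thm:representation}; you combine it, as the paper does, with Lemma~\ref{lem:equality}/Theorem~\ref{thm:main} for the Class~1 criterion and with Corollary~\ref{cor:lowerbound} plus integrality for $\delta_3(Q)\ge1$ on snarks. One wording nit: $\delta_3(Q)=\mu_3(Q)$ is what Theorem~\ref{thm:representation} proves, not a definition, since the paper defines $\delta_3(Q)$ as $\gri(S(Q))-|E(Q)|$, which is also what makes the displayed identity immediate.
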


\begin{remark}[Relation to classical parameters]
The excess $\delta_{k}(Q)$ measures how many vertices must be made deficient to realize a $k$-edge-assignment with singleton labels on $B$. It is reminiscent of classical edge-colouring obstruction parameters: \emph{resistance}, \emph{defect}, \emph{oddness}, etc.\ For $k=3$ the identification is given below (Theorem~\ref{thm:delta3}); for $k\ge4$ whether $\delta_{k}$ coincides with a known measure of $k$-edge-uncolourability or defines a genuinely distinct obstruction parameter remains open. Computing $\delta_{k}(Q)$ is a finite integer programme; we state no numerical claim without a certificate (see Section~\ref{sec:discussion}).
\end{remark}

\begin{theorem}[Identification for $k=3$]\label{thm:delta3}
Let $Q$ be simple cubic (i.e.\ $3$-regular, bridgeless or not) and write $d(Q)$ for the edge-colouring degree: the minimum number of conflicting vertices in a $3$-edge-assignment, equivalently the minimum number of vertices at which incident colours are not pairwise distinct. Then
\[
\boxed{\delta_{3}(Q)=d(Q).}
\]
In particular, for subcubic graphs Fiol, Mazzuoccolo and Steffen \cite{FMS2022} prove
\[
d(G)=r(G)=\rho(G)
\]
where $r(G)$ is the resistance (minimum number of edges whose removal yields a $3$-edge-colourable graph) and $\rho(G)$ is the minimum number of vertices whose deletion yields a $3$-edge-colourable graph; hence for cubic $Q$,
\[
\boxed{\delta_{3}(Q)=d(Q)=r(Q)=\rho(Q).}
\]
Weak oddness is a different parameter, denoted $\omega'(G)$ / $\omega_{0}(G)$ depending on notation, and does not generally coincide with these.
\end{theorem}
\begin{proof}[Proof sketch]
By Theorem~\ref{thm:representation}, $k$-RiDFs correspond to triples $(c,X,\ell)$ with weight $|E|+|X|$ and conditions ($\dagger$) on $A\setminus X$ and ($\ast$) on $X$. For $k=3$, condition ($\ast$) is $c(e)\neq\ell(x)$ for the three edges incident to $x$, hence at $x$ the three incident colours use at most two distinct values, so $x$ is conflicting in the sense of $d(Q)$. Conversely a $3$-edge-assignment with $t$ conflicting vertices yields a triple with $|X|=t$ by putting the conflicting vertices in $X$ and choosing $\ell(x)$ as the (or a) missing colour. Minimality gives equality; the $d=r=\rho$ identities for subcubic graphs are quoted from \cite{FMS2022}.
\end{proof}

\section{Degree-threshold corollary for generalized prisms}\label{sec:threshold}

Via Proposition~\ref{prop:identity} the results translate immediately to independent domination in generalized prisms.

\begin{corollary}[Degree threshold for $i(G\cardK)$]\label{cor:threshold}
Fix $k\ge 3$.
\begin{itemize}
\item[(a) Universal regime $k>\Delta(H)$:] For every graph $H$, if $k>\Delta(H)$ then $i(H\cardK)=|V(H)|$. Deciding whether $i(H\cardK)\le b$ reduces to comparing $b$ with $|V(H)|$.
\item[(b) Boundary regime $k=\Delta(H)$:] The language $\{H\in\cS_{k}: i(H\cardK)\le|B|\}$ is NP-complete, where $H=S(Q)\in\cS_{k}$ is $C_{4}$-free $(k,2)$-biregular bipartite with $\Delta(H)=k$ and bipartition $(A,B)=(V(Q),E(Q))$ for a simple $k$-regular $Q$. More precisely, $i(S(Q)\cardK)=|E(Q)|+\delta_{k}(Q)$ and $i(S(Q)\cardK)=|E(Q)|\iff\chi'(Q)=k$, with $\delta_{k}(Q)=\mu_{k}(Q)$ as in Theorem~\ref{thm:representation}.
\end{itemize}
There is a one-unit threshold in the following sense: $k>\Delta$ forces value $|V|$ for every graph, whereas at $k=\Delta$ NP-hard instances exist, even within $\cS_{k}$.
\end{corollary}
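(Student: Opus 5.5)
The corollary is a direct transfer of earlier results through the prism identity, so I will treat it as bookkeeping rather than new mathematics. Throughout I apply Proposition~\ref{prop:identity}, which gives $\gri(H)=i(H\cardK)$ for every graph $H$ and every $k\ge1$. This is the only bridge needed between the rainbow parameter and independent domination in the prism.

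For part (a), fix any graph $H$ with $k>\Delta(H)$. Theorem~\ref{thm:trivial} gives $\gri(H)=|V(H)|$, and it already records the consequence $i(H\cardK)=|V(H)|$. The decision problem $i(H\cardK)\le b$ therefore holds exactly when $b\ge|V(H)|$. This is a single comparison, clearly polynomial. No hypothesis on $k\ge3$ is used here.

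For part (b), fix $k\ge3$ and let $H=S(Q)$ with $Q$ simple $k$-regular.
\begin{enumerate}
\item Definition~\ref{def:subdiv} gives $\Delta(H)=k$ and $(k,2)$-biregularity.
\item Lemma~\ref{lem:c4free} gives $C_4$-freeness.
\item Remark~\ref{rem:promise} shows that $|B|$ is computable from $H$ as the number of degree-$2$ vertices.
\item By the identity, the language $\{H\in\cS_k: i(H\cardK)\le|B|\}$ coincides setwise with $\cL_k$. Its NP-completeness is then exactly Theorem~\ref{thm:main}.
\item If one wants a certificate native to the prism, an independent dominating set of $H\cardK$ of size at most $|B|$ serves. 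Independence and domination can be checked in polynomial time, since $|V(H\cardK)|=k|V(H)|$ with $k$ fixed.
\item The formula $i(S(Q)\cardK)=|E(Q)|+\delta_k(Q)$ follows from Theorem~\ref{thm:representation} combined with the identity, with $\delta_k=\mu_k$.
\item The equality case $i(S(Q)\cardK)=|E(Q)|\iff\chi'(Q)=k$ is Lemma~\ref{lem:equality} transported through the identity.
\end{enumerate}

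The closing threshold sentence simply juxtaposes (a) and (b): at $k>\Delta$ the value is forced to $|V(H)|$, while at $k=\Delta$ the family $\cS_k$ supplies NP-hard instances.

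I expect no substantive obstacle. The only point needing care is that in (b) the language is defined on inputs $H$ rather than $Q$, so the reduction $Q\mapsto S(Q)$ and the intrinsic computability of $|B|$ must be cited from Remark~\ref{rem:promise}. The source of hardness should also be stated precisely: it is Theorem~\ref{thm:edge-colour}, used for every fixed $k\ge3$.
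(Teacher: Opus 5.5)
Your proposal is correct and matches the paper's proof: part (a) is Theorem~\ref{thm:trivial}, and part (b) is Theorems~\ref{thm:main} and~\ref{thm:representation}, each carried over through Proposition~\ref{prop:identity}. For NP membership you give a certificate native to the prism, while the paper transfers membership from $\gri$ through the same correspondence; this is only a cosmetic difference.
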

\begin{proof}
(a) is Theorem~\ref{thm:trivial} via $i(H\cardK)=\gri(H)$. Membership of $i(H\cardK)\le b$ in NP (for the prism) follows from $\gri$ membership via the per-colour independent correspondence; the decision reduces to $b\ge|V(H)|$ when $k>\Delta(H)$.

(b) follows from Theorems~\ref{thm:main} and~\ref{thm:representation} by substituting $i(H\cardK)=\gri(H)$ (Proposition~\ref{prop:identity}). The reduction and $C_{4}$-freeness are unchanged. Hardness in (b) is existential on the subfamily $\cS_{k}$, not universal for all graphs with $k=\Delta$; triviality in (a) is universal.
\end{proof}

\begin{remark}
Corollary~\ref{cor:threshold}(a) holds for every $H$; the NP-hardness in (b) is existential (there exists a hard subfamily at $k=\Delta$), not universal. This is the precise sense of degree threshold: increasing $k$ by one (from $\Delta$ to $\Delta+1$) moves from a regime containing hard instances to a regime where every instance is trivial.
\end{remark}

\section{Discussion and open problems}\label{sec:discussion}

We have shown a one-unit degree threshold for per-colour independent rainbow domination and for independent domination in generalized prisms, with a trivial universal regime $k>\Delta$ and NP-hard instances at the boundary $k=\Delta$ even for $C_{4}$-free $(k,2)$-biregular subdivision graphs. The proof is essentially the incidence-graph encoding of edge-colouring, made $C_{4}$-free by simplicity of $Q$ and tight via Observation~\ref{obs:degk}.

Several directions remain:

\begin{enumerate}
\item \textbf{Higher-degree excess.}
For $k\ge4$, does $\delta_{k}(Q)$ coincide with a known measure of $k$-edge-uncolourability, or does it define a genuinely distinct obstruction parameter? For $k=3$ the identification $\delta_{3}=d=r=\rho$ is given in Theorem~\ref{thm:delta3}; for $k\ge4$ the relationship remains open. Computing $\delta_{k}(Q)$ is a finite integer programme (Theorem~\ref{thm:representation} gives an explicit finite search over $c,\ell,X$) and we leave exact values for computational follow-up. Systematic ILP/SAT testing on known snarks and on $k$-regular Class~2 graphs for $k\ge4$ should reveal whether $\delta_{k}$ separates from resistance/defect.

\item \textbf{Approximability and FPT.}
Is $\delta_{k}(Q)$ approximable? Is $\textsc{RI-DOM}_{k}$ FPT in $\delta_{k}$? Edge-colouring obstructions are often FPT in defect; does that transfer via Theorem~\ref{thm:representation}?

\item \textbf{Girth and degree refinements.}
$S(Q)$ has girth $\ge6$ when $Q$ simple. Can hardness be pushed to larger girth or to $(k,2)$-biregular graphs that are also $C_{6}$-free? Subdivision of high-girth $k$-regular graphs yields high girth.

\item \textbf{Prism perspective.}
The identity $i(G\cardK)=\gri(G)$ suggests studying $i(G\square H)$ for general $H$, not only $H=K_{k}$. Does a similar threshold hold when $H$ is a complete graph vs.\ sparse $H$? The degree-vs-clique-size argument (Observation~\ref{obs:degk}: $\deg(v)<k$ forces non-zero) may generalize to $G\square H$ with $|V(H)|=k$.

\item \textbf{Counting and enumeration.}
How many $k$-RiDFs of weight $|E|+\delta_{k}$ does $S(Q)$ have? This counts $k$-edge-colourings (when $\delta_{k}=0$) and near-colourings otherwise.

\item \textbf{The case $k=2$.}
As noted in Remark~\ref{rem:k2}, our reduction fails for $k=2$ because $\deg(b)=k$. The complexity of per-colour independent $2$-rainbow domination on $\cS_{2}$ requires a separate argument and is not addressed here.
\end{enumerate}

We hope the subdivision viewpoint and the excess parameter provide a concrete handle for separating or identifying per-colour independent domination obstructions with classical edge-colouring obstructions.

\section*{Availability}
\addcontentsline{toc}{section}{Availability}

This preprint is archived on Zenodo under CC~BY~4.0: concept DOI \texttt{10.5281/zenodo.21847195} (cite this), this version DOI \texttt{10.5281/zenodo.21847212} (\url{https://doi.org/10.5281/zenodo.21847212}). Source files are available at \url{https://github.com/hermespromox/degree-threshold-generalized-prisms}. Correspondence: \texttt{hassine.achour@gmail.com}.


\end{document}